\newtheorem{theorem}{Theorem}
\newtheorem{lemma}{Lemma}
\newtheorem{corollary}{Corollary}
\newenvironment{proof}{\noindent \textbf{Proof: }\ignorespaces}
  {\hspace*{\fill}$\Box$\medskip}
\title{Parameterized Algorithms for Clustering PPI Networks}
\author{
Sriganesh Srihari and Hon Wai Leong\thanks{School of Computing, National University of Singapore, 13 Computing Drive, Singapore 117590. 
E-mail: \{srigsri, leonghw\}@comp.nus.edu.sg.}}
\begin{document}
\date{}
\maketitle

\def\comment#1{{\footnotesize\ligne \vspace{-5pt} \noindent#1\vspace{-5pt}\\\ligne}}
\def\comment#1{}

\begin{abstract}
With the advent of high-throughput wet lab technlogies the amount of protein interaction data available publicly has increased
substantially, in turn spurring a plethora of computational methods for~{\em in silico} knowledge discovery from this data. In this paper,
we focus on~{\em parameterized} methods for modeling and solving complex computational problems encountered in such knowledge discovery from 
protein data. Specifically, we concentrate on three relevant problems today in proteomics, namely detection of lethal proteins, functional
modules and alignments from protein interaction networks. We propose novel graph theoretic models for these problems 
and devise practical parameterized algorithms. At a broader level, we demonstrate how these methods can be 
viable alternatives for the several heurestic, randomized, approximation and sub-optimal methods 
by arriving at parameterized yet optimal solutions for these problems. We substantiate these theoretical results by experimenting on
real protein interaction data of~{\em S.cerevisiae}(budding yeast) and verifying the results
using gene ontology.
\end{abstract}

\section{Introduction}
%Motivation
With the advent of many high-throughput wet lab technologies the amount of biological data available for analysis
has increased substantially. For instance, Mass Spectrometry and Tandem Afinity Purification(MS-TAP), Yeast to Hybrid(Y2H) 
and ChIP-on-Chip experiments have contributed to large publicly accessible biological databases like BIND,
Biogrid and MIPS~\cite{Databases}. 
This in turn has increased the need to make sense out of such large quantities of data and to extract useful and intelligent information
from them. For instance, discovery of disease-causing genetic information from the population in a certain
region, protein structural information from protein databases, phylogenetic distances from genomic data, etc. 
This can have applications in areas like drug discovery, population genetics and phylogenetics.

%challenges
However, in most cases, extraction of such intelligent and useful information is not easy. 
Many of the problems encountered are NP-hard, with researchers working on
sub-optimal, heurestic-based, yet fast solutions for solving these problems. 
Such {\em in silico} knowledge discovery from biological data is a very complicated area of research. 
The techniques adopted span several areas of computing like algorithmic theory, statistics, machine learning, etc. 
However, no particular technique
or set of algorithms can claim to solve a problem completely. Even if a particular technique solves a problem
efficiently and exactly, the solution may turn out to be not useful for biologists. Biologists not always look
for exact or fast solutions or complicated methods. Many times the `biological relevance' of the solution is very important. Hence, searching for
solutions in computational biology 
is usually a subtle balance between specificity(measured in terms of false positives and false negatives), efficiency(speed) and biological relevance. 
This is one the reasons why
many techniques developed these days are~{\em ensemble} methods that mix and mash several methods each strong in one sense.

%We focus on protein network data. Give some background.
In this paper, we intend to show the recent area of~{\em Parameterized Complexity} as an alternative to existing
techniques used to solve hard problems in computational biology. In order to show this, we take up several relevant problems motivated from
computational proteomics and~{\em model}~them as interesting theoretical problems, thereby proposing parameterized algorithms for the same. 
In this manner, on one hand, we deal with interesting
theoretical results which have relevance to computer science and on the other, concrete applications to computational biology.
We do not claim that our techniques are most efficient
or give best biologically relevant solutions, but they certainly are worthy alternatives to several existing techniques, and hence can also
be adopted by ensemble methods.

\paragraph{Problems addressed and related work:}
In this paper, we specifically focus on knowledge extraction from~{\em protein interaction} data. 
We try to solve two problems,
\vspace{0.1cm}

(1)~extraction of hubs and functional modules from protein interaction networks, and 

(2)~alignment of protein interaction networks across species. 

Some previous works in this area have focused on heurestic methods~\cite{Guangyu},
linear programming and matrix methods~\cite{Chris_Ding} and clustering~\cite{Chuan}. 
Kumar~\cite{Vipin} gives a detailed survey on data mining based methods.
Some applications of parameterized algorithms in computational biology 
can be found in the works of B\"{o}cker~\cite{Bocker1,Bocker2},
Neidermeier~\cite{Nie06}, Langston~\cite{Langston} and some of our preliminary work on hubs 
and quasi cliques in~\cite{Sri_HonWai}. Having given the references, 
we now begin our work with a brief background of the problems in focus:

\paragraph{Protein networks:}
Proteins are organic compounds that form vital constituents of living organisms. They are responsible 
for several biological processes and pathways. It has been observed that proteins interact with one another while performing functions. 
These~{\em interactions} are captured in the form of~{\em Protein-Protein Interaction} or PPI networks that are now publicly 
available in databases such as BIND, MIPS and Biogrid~\cite{Databases}. PPI networks demonstrate the so-called~{\em scale-free} properties. 

\paragraph{Scale-free properties of protein networks:}
Though not all biological networks demonstrate scale-free properties, it is believed 
that proteins networks(PPI) display most of these~\cite{Alexei,Wutchy}. Scale-free networks are those that 
follow power-law distribution in connectivies, that is, 
probability that a node links to $k$ other nodes is $P(k)=k^{-\alpha}$, where $\alpha$ is a small constant~\cite{Barabasi}.
This leads to a skewed distribution - there exist a small set nodes with very large interactions while others have 
far less interactions. Such a phenomenon gives
rise to two interesting structures namely, `hubs' and `communities'. Hubs in protein networks represent~{\em lethal} or important proteins
that interact with most other proteins and hence hold the network intact. Most lethal proteins are strategically located
and their disruption could possibly lead to biological lethality. Hence, their study is important to understand the causes of diseases.
For example, the deletion of the myosin pair~{\sc myo3-myo5} causes severe defects in growth and cytoskeleton organization~\cite{NCBI}.
Communities refer to~{\em functional modules}. Proteins within a community form a densely connected region within the network
and perform specific biological functions. For instance, functional categories of proteins involved in stress response, 
biosynthesis of vitamins and prosthetic groups~\cite{Guangyu}.
%Every protein in a community interacts with `most' other proteins within its community and with `few' outside. 
%Vazquez et al.~\cite{Alexei}
%support this hypothesis that proteins fall into functional modules based on their amino acid similarities
%to perform similar biological tasks. 
And therefore detecting communities is relevant to understand protein affinity in biological processes.

\paragraph{Protein network alignment:}
Researchers usually derive phylogenetic or evolutionary distances between species by comparing 
their genome sequences - calculating edit distances, reversals, etc. However, some recent works~\cite{Recomb08} have shown 
that alignment of protein networks can also lead to useful clues in constructing phylogenetic trees. 
Alignment here refers to deriving local isomorphisms between protein networks of two or more species to
understand the similarity traits between them; analogous to sequence alignments. Protein network alignments can be useful
to derive network~{\em motifs}(frequent local network patterns) across protein networks of species~\cite{Berg}.

\section{Preliminaries}
\label{Prelims}
{\em Parameterized Complexity} can be considered as a two-dimensional generalization of the classical complexity
theory, with one dimension being the size of the input instance, $|I|=n$ and second being a fixed input parameter $k$~\cite{DF99,FG06}.
Parameterized complexity studies the hardness of problems based on these two dimensions and does a more finer classification
of problems into: $FPT < W[1] < W[2]...$ Among these classes is the class of FPT or~{\em Fixed-parameter Tractable} problems 
which are interesting because they may be practically solvable. Formally, a problem is in the class FPT if it can be solved in time $O(f(k)n^{O(1)})$, 
where $f$ is a function of the fixed parameter $k$ alone~\cite{Nie06}. If under certain practical conditions, $k$ turns out to be `small' compared to $n$,
then these solutions turn out to be more tractable compared to the trivial $O(n^k)$ solutions for such problems.
Even more importantly, FPT solutions arrive at optimal solutions (subject to parameterization) and hence form interesting
alternatives to approximate, randomized or heurestic approaches. And hence, it is worthwhile exploring parameterized
alternatives to problems in computational biology because many times missing out some results due to approximation
may lead to missing out on biologically relevant cases.

Throughout this paper, we model protein networks as simple undirected graphs - proteins as nodes and their interactions as edges. 
Given a graph $G=(V,E)$, $n$ represents number of vertices,
and $m$ represents the number of edges. For a subset $V'\subseteq V$, by $G[V']$ we mean the subgraph 
of $G$ induced on $V'$. By $N(v)$ we represent all vertices (excluding $v$) that are adjacent to $v$, and by $N[v]$, 
we refer to $N(v)\cup \{v\}$. The degree of $v$ is $d(v)=|N(v)|$.

\section{Modeling hubs and functional modules in protein networks}
\label{modeling}
We first deal with the problem of extracting hubs and functional modules.
Most previous approaches have looked at the problem of detecting functional modules separately from that of hub proteins.
However, hub proteins form the key constituents of most functional modules. In any functional module, the hub proteins not only hold 
the other proteins within that module together, but also determine the most dominant functions of that module. 
Also, protein networks are usually very sparse with several singleton(isolated) and loosely-connected proteins.
Such proteins may be connected to one or more hub or non-hub proteins,
but generally do not take part in any functional module. Taking such proteins into consideration induces false positives, thereby
affecting the specificity of the results.
As a result of these observations, we approach the whole problem in the following three steps:
\vspace{0.1cm}

{\bf Step 1:}~Determine all the hub proteins within the network,

{\bf Step 2}:~Determine all the proteins(hubs and non-hubs) which are more likely to 
take part in some functional module(s) and filter away the rest, 

{\bf Step 3:}~Combine the information from above steps to determine the individual functional modules.
\vspace{0.1cm}

We now describe these steps in the following paragraphs:

\paragraph{Theoretical modeling:}
The main theoretical counterpart of our practical problem in hand is what we call the~{\sc list dominating set} problem:
\vspace{0.1cm}

{\sc list dominating set:}
For a given graph $G=(V,E)$, an integral list $L=\{l(v_1),\cdots,l(v_n)\}$,
$0<l(v_i) \leq d(v_i)$, $1 \leq i \leq n$, and a positive integer $k$, a subset $D \subseteq V$ 
is called~{\em $L$-dominating}~if for every $v_i\in V$, either $v_i\in D$ or $|N(v_i)\cap D|\geq l(v_i)$. 
The problem asks if we can find a $D$ such that $|D|\leq k$?
\vspace{0.1cm}

This was first introduced in~\cite{har_ger} under the name of `vector dominating set'.
The `vector' $L$ gives the number of neighbors any vertex $v \in V\setminus D$ needs to have in $D$.
This is an interesting problem which generalizes the classical~{\sc vertex cover} 
problem when $l(v_i)=d(v_i)$, and the~{\sc dominating set} problem when $l(v_i)=1$. It is NP-complete
in the general complexity domain, but in the parameterized domain, it scales a whole `tower' of
complexities from FPT(vertex cover) to W[2]-hard(dominating set), and the `jump' has been studied
by us in~\cite{Raman_Saket_Sri}.

In this paper, we show how this~{\em single} problem can serve our~{\em dual} purpose of modeling hubs and participating
proteins~(both steps 1 and 2) very well. However, since in general it is W[2]-complete~\cite{Raman_Saket_Sri},  
we require a suitable reformulation of the problem to arrive at a practical solution, which we propose as follows:
\vspace{0.1cm}

{\sc $(k,\lambda)$-list dominating set:} For a given graph $G$, an integral list $S=\{s(v_1), \cdots, s(v_n) \}$,
$0 \leq s(v_i) < d(v_i)$, $1 \leq i \leq n$ and positive parameters $k$ and $\lambda$, 
a subset $D \subseteq V$ is called $(k,\lambda)$-dominating 
if for every $v_i \in V$, either $v_i \in D$ or 
$|N(v_i) \cap D| \geq d(v_i)-s(v_i)$.
The problem asks whether we can find a $D$, such that $|D| \leq k$ and
$\Sigma_{v_i \in V \setminus D}{s(v_i)} \leq \lambda$?
\vspace{0.1cm}

Here, we refer to $S$ as `slack' list, given by $S=\{s(v_1),\cdots, s(v_n) \} =
\{d(v_1)-l(v_1),\cdots,d(v_n)-l(v_n)\}$. For a vertex $v_i$ outside $D$, $d(v_i)-s(v_i)$ gives the
number of neighbors of $v_i$ in $D$ and $s(v_i)$ gives the number of neighbors outside $D$, while
$\lambda$ is an~{\em additional} parameter introduced that bounds the number of~{\em edges} allowable between the vertices outside $D$.
Note that if $\Sigma_{v_i \in V \setminus D}{s(v_i)} \leq \lambda$, at most $\frac{\lambda}{2}$ edges are allowable between
the vertices outside $D$.
\vspace{0.1cm}

Next, we show how this new formulation models the steps stated above very well:

\paragraph{Step 1 - Modeling hubs:}
Intuitively, hubs are~{\em high-degree} nodes with which most of other nodes in the network interact. 
Hubs have been previously modeled as vertex covers~\cite{Langston} or by just blindly picking all nodes with degrees
above a certain threshold. Modelling hubs by
vertex covers {\em forces} all linkages to be
{\em covered} while threshold-based approach
may miss some important hubs that have degrees below
the threshold. So, what is required is a suitable `tunable' parameter that ensures that there is
some `relaxation' in the modeling. 

Therefore, we model hubs by {\sc $(k,\lambda)$-lds} for which the $s(v_i)$ values 
are `small'. This results in set $D$ becoming the hub-set for the graph $G$ covering all
but $s(v_i)$ edges incident on $v_i \in V \setminus D$.
In the special case of $s(v_i)=0$, $D$ becomes the the vertex cover of $G$. 
However, the advantage~{\sc $(k,\lambda)$-lds} offers is the
flexibility by means of `tunability' of $S$.
For any vertex, we can control the number of neighbors inside and outside $D$ by tuning $S$.

Such a tunability models real-world networks like PPI very well.
For instance, any~{\em non-hub} protein can have few~{\em direct}
interactions with other non-hub proteins to achieve certain biological functionalities, 
but at the same time be {\em densely} linked to hub proteins~\cite{Sri_HonWai}.

\paragraph{Step 2 - Modeling participating proteins:}
Technically, we need to arrive at the main~{\em core} that contains the participating proteins(involves hubs and non-hubs)
of all the functional modules in the network.
We model this core as a relatively dense sub-network in the protein network. 
We essentially look for a single(possibly huge) such sub-network.
Protein networks being usually very sparse, the density of such a core may not be very high, 
but by finding one we look to filter away all the isolated and loosely-connected proteins. 
%{\bf Figure~\ref{Cluster}} in Appendix A depicts such a cluster. 
%We model this core as a~{\em Quasi Clique} in the protein network. 
%A quasi clique of a graph is basically a subgraph with `small' number of edges missing~\cite{Carter_Johnson}.
%A quasi clique $Q$ is usually associated with two parameters, $\gamma$, the edge density and $|Q|$, the size of the quasi clique.
%$Q$ is called a $\gamma$-quasi clique if the ratio of the number of edges in $Q$ to the total number of edges in a clique of the
%same size is $\gamma$.
A core $C$ is associated with two parameters, $\gamma$, the edge density and $|C|$, the size of the core. The density $\gamma$
is defined as the ratio of the number of edges in $C$ to the total number of edges in a complete subgraph of the same size.

%Quasi cliques are interesting structures and have been used previously to model graph properties in data mining~\cite{Carter_Johnson}.
%In our case, we look for a single(possibly huge) quasi clique within the protein network. 
%This quasi clique contains all the functional modules. It includes all the participating
%proteins(hubs and non-hubs) that have `fairly' large number of interactions with each other (may not be with all) and thus more likely to participate 
%in one or more modules. By doing so, we can also filter away all the singleton(isolated), loosely-connected proteins and therefore, non-functional
%proteins.

It is interesting to note that when we consider the~{\em complement} of a given graph as an instance of the~{\sc $(k,\lambda)$-lds} problem 
to find the set $D$, the remaining vertices~(in $V \setminus D$) form a $(n-k)$-size core $C$ of the original graph.
$s(v_i)$ gives the maximum number of edges missing from every vertex $v_i$ in the core. 
Since $\Sigma_{v_i}{s(v_i)} \leq \lambda$,
at most $\frac{\lambda}{2}$ edges can be missing from the core, which gives a lower bound on $\gamma$. Again $S$ acts as a
`tunable' vector to control the sensitivity of the core.
Hence, we can use our formulation to not only bound the number of missing edges but also specify the edges corresponding to which
vertices can be missing. 

\paragraph{Step 3 - Combining the information:}
The above steps make it easier to determine the individual functional modules.
One natural way to proceed is to consider each hub $h \in D$
and form the set $FM(h) = h \cup [N(h)\cap C]$, that is, $FM(h)$ is the set of all vertices in $C$ that are adjacent to hub $h$.
We then calculate the edge densities(or average degrees) of each set and rank them. We consider each set as a functional module
and verify it based on shared gene-protein ontologies.

%One possible way is we go through each hub protein(determined in step 1) and form a set of all proteins interacting with it from among the participating proteins
%(determined in step 2). We consider each such set as a functional module and rank them based on the number of shared interactions.
%We later verify each functional module based on shared gene-protein ontologies.
%Note that this allows multiple hubs to participate within a single functional module
%and also multiple proteins to participate in multiple functional modules.

\section{Modeling of disjoint functional modules}
The modeling described in the previous section has the flexibility of giving overlapping functional modules, that is, same proteins
participating in more than one module. However, 
many times one looks for disjoint or non-overlapping functional modules that are expected to perform different functions.
In this section, we show a model to determine such non-overlapping modules and to do so
we introduce the following abstract problem:
\vspace{0.1cm}

{\sc cluster editing:}
Given an input graph $G=(V,E)$ and positive integers $k$ and $\lambda$, this problem
asks whether we can modify $G$ to consist of disjoint clusters by adding or deleting at most $k$ edges
such that total sum of the edges missing across the resultant clusters is at most $\lambda$.

By clusters we mean~{\em partial cliques} which may have `few' edges missing compared to cliques
(complete subgraphs).
The~{\sc qce} problem models disjoint modules in a straightforward manner. After $k$ edge edit operations, if
we can obtain a solution, the resultant graph will be disjoint union of multiple clusters. In protein networks, 
each such cluster could represent a disjoint module. A similar modeling by Niedermeier et al.~\cite{Nie06} considers
cliques for each disjoint module, however cliques are too restrictive and do not cater to naturally 
occuring communities in networks.

\section{Modeling protein network alignments}
\label{alignment}
We next concentrate on the second problem we mentioned earlier namely, protein network alignments across species.
Modeling of protein network alignments leads us to local graph alignment, which we call~{\em quasi isomorphism}.
By quasi isomorphism we mean that given two
labeled graphs(labeled by protein annotations), we try to find local regions in the graphs that are `highly' similar or
`highly' isomorphic. This can be considered as a relaxation on the local graph isomorphism problem where one tries to find
exact one-to-one correspondence between the labeled nodes and edges in two regions.

In order to model quasi isomorphism, we first state the concept of~{\em product} graph of two graphs. Let $G_1(V_1,E_1)$ and
$G_2(V_2,E_2)$ be two given graphs. Let $l$ be any function that maps the labels of vertices of $G_1$ to $G_2$. We can
construct the product graph $H$ of $G_1$ and $G_2$ as follows:
\vspace{0.1cm}

(a)~In graph $G_1$, if there is an edge $e_1=(u_1,v_1) \in E_1$ between vertices labeled $u_1,v_1 \in V_1$
(or if there is no edge between vertices labeled $u_1,v_1 \in V_1$), and
in graph $G_2$, there is a corresponding edge $e_2=(u_2,v_2) \in E_2$ between vertices labeled $u_2=l(u_1),v_2=l(v_1) \in V_2$
(or if there is no edge between vertices labeled $u_2=l(u_1),v_2=l(v_1) \in V_2$), then
we add vertices labeled $\{u_1,v_1\}$ and $\{u_2,v_2\}$ to the vertex set of $H$ and add an edge between them.
\vspace{0.1cm}

(b)~In all other cases, we do not add any labeled vertices nor edges to $H$.
\vspace{0.1cm}

An interesting observation here is that a core $C$ in the product graph $H$ corresponds to local regions
in graphs $G_1$ and $G_2$ that are `highly' similar. That is,
\vspace{0.1cm}

(a)~the vertices of these two local regions display labeled mapping. If the region of $G_1$ has vertex $u_1$, then
the region of $G_2$ will have vertex $u_2=l(u_1)$, and

(b)~there is~{\em partial} correspondence between the edges in these two local regions, that is, there can be a bounded number of
{\em mismatches}. By mismatches we mean if there is an edge $(u_1,v_1)$ in the local region of $G_1$, then there is no
edge between $u_2=l(u_1)$ and $v_2=l(v_1)$ of $G_2$ or vice versa. 
These mismatches are bounded by the number of edges missing
in the core $C$ of the product graph $H$. 

To apply this to PPI networks, we first build the product graph $H$ based on protein labelings and then find the core $C$ in $H$.
Here, the mapping function $l$ can be an identity function $l(x)=x$ or can depend on the annotation schemes 
used(for example, the same protein could be labeled by different names in different species).

\section{Parameterized algorithms}
\label{solutions}
We dedicate this section to propose parameterized algorithms for all the theoretical problems introduced till now.

\subsection{Solution to {\sc $(k,\lambda)$-lds}} We begin by giving the following useful lemma:

\begin{lemma}
\label{lemma1}
If for $v_i \in V$, $d(v_i) > k + \frac{\lambda}{2}$, then $v_i$ is part of every
$(k,\lambda)$-dominating set $D$.
\end{lemma}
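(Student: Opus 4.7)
The plan is to argue by contradiction: assume some $(k,\lambda)$-dominating set $D$ omits $v_i$ even though $d(v_i) > k + \lambda/2$, and derive a contradiction by bounding the neighbors of $v_i$ in and out of $D$ separately.

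First I would split $N(v_i)$ into $N(v_i) \cap D$ and $N(v_i) \setminus D$. The first part is trivially bounded by $|D| \leq k$. For the second part I need the bound $|N(v_i) \setminus D| \leq \lambda/2$, and this is where the $\lambda$ parameter earns its keep. The natural route is to first establish a global claim: the number of edges inside $V \setminus D$ is at most $\lambda/2$. For any vertex $u \notin D$ the defining condition $|N(u) \cap D| \geq d(u) - s(u)$ gives $|N(u) \setminus D| \leq s(u)$, so summing over $u \notin D$ and double-counting edges inside $V \setminus D$ yields $2\,e(G[V\setminus D]) = \sum_{u \notin D} |N(u)\setminus D| \leq \sum_{u \notin D} s(u) \leq \lambda$. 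In particular $|N(v_i) \setminus D| \leq e(G[V \setminus D]) \leq \lambda/2$.

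Combining the two bounds, $d(v_i) = |N(v_i) \cap D| + |N(v_i) \setminus D| \leq k + \lambda/2$, contradicting the hypothesis $d(v_i) > k + \lambda/2$. Hence $v_i$ must lie in $D$, as claimed.

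I do not foresee any real obstacle; the only subtle point is remembering that the slack list $S$ is a given input (not a quantity we can choose per candidate set $D$), so that the constraint $\sum_{u \notin D} s(u) \leq \lambda$ genuinely controls the number of non-$D$ edges via the double-counting step above. The rest is elementary arithmetic.
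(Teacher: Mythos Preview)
Your argument is correct and follows essentially the same route as the paper: assume $v_i\notin D$, bound $|N(v_i)\cap D|\le k$, and use the slack condition to bound the remaining neighbors by $\lambda/2$. The paper phrases the final step as ``more than $\lambda/2$ neighbors outside $D$ forces $\sum_{u\notin D} s(u)>\lambda$'', whereas you make the double-counting explicit by first bounding $e(G[V\setminus D])\le \lambda/2$; these are two sides of the same inequality and neither adds anything the other lacks.
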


\begin{proof}
Observe that if $d(v_i) > k + \frac{\lambda}{2}$ and $v_i \in V\setminus D$, then $D$ will be able to accommodate
at most $k$ neighbors of $v$ forcing the remaining neighbors to be in $V \setminus D$. This would mean
$\Sigma_{v_i \in V\setminus D}{s(v_i)} > \lambda$. Hence, $v_i$ needs to be part of every 
$(k,\lambda)$-dominating set $D$.
\end{proof}

This leads to the following FPT algorithm for the {\sc $(k,\lambda)$-lds} problem:

\begin{theorem}
\label{theorem1}
{\sc $(k,\lambda)$-lds} is FPT in $k$ and $\lambda$.
\end{theorem}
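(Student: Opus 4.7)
The plan is to combine Lemma~\ref{lemma1} with the problem-inherent constraint $s(v)<d(v)$ to kernelize the instance to a size polynomial in $k+\lambda$, and then finish by brute-force enumeration of the dominating set.

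I would start by applying Lemma~\ref{lemma1} as a reduction rule: compute $D_{0}:=\{v\in V:d(v)>k+\lambda/2\}$ and demand $D_{0}\subseteq D$ in every candidate solution. If $|D_{0}|>k$, report NO; otherwise every remaining vertex has degree at most $\Delta:=k+\lambda/2$.

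Next I would kernelize. For every $v\in V\setminus D$, the domination requirement combined with the problem-inherent $s(v)<d(v)$ yields $|N(v)\cap D|\ge d(v)-s(v)\ge 1$, so $v\in N(D)$. Hence $V\subseteq N[D]$ and
\[
|V|\;\le\;|N[D]|\;\le\;|D|(1+\Delta)\;\le\;k\bigl(1+k+\lambda/2\bigr)\;=\;O(k^{2}+k\lambda).
\]
If the residual graph exceeds this bound, report NO; otherwise we are left with a kernel on $O(k^{2}+k\lambda)$ vertices.

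On the kernel I would brute-force over every $D\supseteq D_{0}$ of size at most $k$---at most $\binom{O(k^{2}+k\lambda)}{k}=(k+\lambda)^{O(k)}$ such sets---and check in polynomial time whether each meets the $(k,\lambda)$-domination condition and the slack budget $\sum_{v\notin D}s(v)\le\lambda$. The total running time is $(k+\lambda)^{O(k)}\cdot\mathrm{poly}(n)$, FPT in both $k$ and $\lambda$. I expect the main obstacle to be spotting the kernel bound: the strict inequality $s(v)<d(v)$ is what forces every vertex outside $D$ to be dominated at least once, so that Lemma~\ref{lemma1}'s degree cap translates directly into a linear (in $|D|$) bound on $|N[D]|$ and hence on the whole graph. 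Once this observation is in place, both the kernelization and the subsequent enumeration are routine.
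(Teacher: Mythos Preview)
Your kernel bound is incorrect. The inequality $|N[D]|\le |D|(1+\Delta)$ requires every vertex of $D$ to have degree at most $\Delta=k+\lambda/2$, but you have already forced $D_{0}\subseteq D$, and by definition every vertex of $D_{0}$ has degree strictly greater than $\Delta$. So the step $|V|\le |D|(1+\Delta)$ does not follow.

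This is not just a technicality; the conclusion itself is false. Take $G=K_{1,n-1}$, a star with centre $c$ and $n-1$ leaves, set $s(v)=0$ for all $v$, and let $k=1$, $\lambda=0$. Then $D=\{c\}$ is a valid $(k,\lambda)$-list dominating set: every leaf has its single neighbour in $D$, and the slack sum is $0$. Your rule puts $c$ into $D_{0}$ (since $d(c)=n-1>1=k+\lambda/2$ for $n\ge 3$), and then your kernel test asserts $|V|\le k(1+k+\lambda/2)=2$, so for any $n\ge 3$ you would wrongly output \textsc{no}. More generally, there is no bound on $|V|$ in terms of $k$ and $\lambda$ alone: a single high-degree hub in $D$ can dominate arbitrarily many vertices. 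Working instead in the residual graph $G[V\setminus D_{0}]$ does not rescue the argument either, because a vertex outside $D$ may have all of its required neighbours inside $D_{0}$ and hence need no neighbour in $D\setminus D_{0}$.

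The paper sidesteps this by never bounding $|V|$. After forcing $D_{0}$ into the solution, it bounds only the number of \emph{edges} in $G[V\setminus D_{0}]$ by $(k-|D_{0}|)(k+\lambda/2)+\lambda/2$, and then runs a depth-bounded three-way branching on edges $(u,v)$ (put $u$ in $D$; put $v$ in $D$; leave $(u,v)$ outside $D$ and charge it to $\lambda$), giving an $O(3^{\,k+\lambda/2}n^{2})$ algorithm. In the star example this works because $G[V\setminus\{c\}]$ has no edges at all, so the search terminates immediately with the correct answer.
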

\begin{proof}
Initially we set $U = V$ and $D = \emptyset$. And our parameters are $k > 0$ and $\lambda > 0$.
We use Lemma~\ref{lemma1} to arrive at data reduction techniques which, in parameterized complexity literature, are
called~{\em Kernelization Rules}:

If there is a vertex $v_i$ such that $d(v_i) > k + \frac{\lambda}{2}$, then 
do $D := D \cup \{v_i\}$ and $U := U \setminus \{v_i\}$.
We apply the rule exhaustively till there are no more vertices $v_i$. 
If the resultant solution set $D$ has more than $k$ vertices, then there exists no solution and we return~{\sc no}.

After application of the rule, all vertices $v_i \in U$ have $d(v_i) \leq k + \frac{\lambda}{2}$.
Therefore, the number of edges in the remaining induced subgraph $G[V \setminus D]$ is 
at most $(k-|D|)(k+\frac{\lambda}{2}) + \frac{\lambda}{2}$, {\em if it is to have a solution}.
The reason for this is: $D$ can accommodate $k-|D|$ more vertices, each of which has
degree at most $k+\frac{\lambda}{2}$. Hence, moving any vertex from $V\setminus D$ to $D$ can cover at most
$k+\frac{\lambda}{2}$ edges. Plus, allowable $\frac{\lambda}{2}$ edges in $G[V\setminus D]$.
If the number of edges is more than this, we return~{\sc no}. 
This ends our kernelization step.

Next, we then perform a~{\em Depth-bound Search} on the remaining graph.
We set our new parameter to $k' = k-|D|$.
At every step of the search we maintain two partitions of $U$:
\begin{itemize}
\setlength{\itemsep}{-0.15cm}
\item $X$: for every $v_i \in X$, $|N(v_i) \cap D| \geq d(v_i)-s(v_i)$,
\item $Y$: for every $v_i \in Y$, $|N(v_i) \cap D| < d(v_i)-s(v_i)$.
\end{itemize}
The partition $X$ consists of all vertices $v_i$ that have their required (at least $d(v_i)-s(v_i)$)
neighbors in $D$, while the partition $Y$ has all other vertices of $U$.
We refer the vertices in $X$ as~{\em saturated} and those in $Y$ as~{\em unsaturated}. 

We pick an edge $(u,v)$ from the remaining graph and branch upon the following conditions: 
either $(u,v)$ is in $G[V\setminus D]$ or it is not. If not, then either $u$ or $v$ is in $D$.  
So, we recursively solve the problem by performing a three-way branching at each step. 
More specifically, if we move a vertex $u$(or $v$) into $D$, we set $D := D \cup \{u\}$ (or $D:=D \cup \{v\}$) 
and $U:= U \setminus \{u\}$ (or $U := U \setminus \{v\}$). 
By doing this, if we can find vertices $w \in Y$ such that $|N(w) \cap D| \geq d(w)-s(w)$, 
then we move $w$ from $Y$ to $X$, by doing $X:= X \cup \{w\}$ and $Y:=Y-\{w\}$ (that is, $w$ is now
{\em saturated}). We reduce the parameter $k'$ by 1. 
However, if the edge $(u,v)$ is retained in $G[V\setminus D]$, we reduce the 
parameter $\frac{\lambda}{2}$ by 1.
At any particular step, if $k=0$ and $Y \neq \emptyset$,
we return~{\sc no}. Else, we return~{\sc yes} and the solution set $D$.

The correctness of the algorithm is clear from the description. 
For the time complexity, observe that at each recursive step, 
we perform a three-way branching and the depth of the recursion tree is 
at most $d=k+\frac{\lambda}{2}$. Hence, the total number of nodes in the tree is 
bounded by $3^{k + \frac{\lambda}{2}}$. Since we spend polynomial time for kernelization 
and at each step of the search tree, the complexity of our algorithm is 
bounded in the worst case by $O(3^{(k+\frac{\lambda}{2})}n^2)$, which is FPT in $k$ and $\lambda$.
\end{proof}

{\em Pruning the search tree:} This is based on the observation
that when we pick an edge $(u,v)$ on any recursive call, if $s(u)=0$ or $s(v)=0$ or $\lambda = 0$ 
then we can avoid the third branch altogether. Hence, the complexity can be reduced to $O(c^{(k+\frac{\lambda}{2})}n^2)$,
$2 \leq c < 3$.

We now derive a relationship between~{\sc $(k,\lambda)$-lds} and core through the following lemma,

\begin{lemma} 
For a graph $G=(V,E)$, let $S=\{s(v_1), \cdots, s(v_n) \}$ be an integral vector, $k$ and $\lambda$ be 
positive parameters. If $G$ has a $(k,\lambda)$-list dominating set $D$
of size at most $k$ such that $\Sigma_{v_i \in V\setminus D}{s(v_i)} \leq \lambda$, then the 
subgraph $\bar{G}[V\setminus D]$ of the complement
graph $\bar{G}$ of $G$, is a core $C$ of size at least $(n-k)$ with density
$\{ {n-k \choose 2} - \frac{\lambda}{2} \}/{n-k \choose 2} \leq \gamma \leq 1$.
\end{lemma}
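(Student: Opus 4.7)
The plan is to unwind the two definitions (that of $(k,\lambda)$-lds and that of the density $\gamma$ of a core) and observe that the constraints on $s(v_i)$ translate directly into an edge-count bound inside $G[V\setminus D]$, which in turn is a non-edge-count bound inside $\bar{G}[V\setminus D]$.

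First I would establish the size bound. Since $|D|\le k$ by hypothesis, trivially $|V\setminus D|\ge n-k$, so the candidate core $\bar{G}[V\setminus D]$ has at least $n-k$ vertices. Next I would bound the edges of $G[V\setminus D]$: by the definition of $(k,\lambda)$-dominating set, every $v_i\in V\setminus D$ satisfies $|N(v_i)\cap D|\ge d(v_i)-s(v_i)$, equivalently $|N(v_i)\cap(V\setminus D)|\le s(v_i)$. Summing over $V\setminus D$ counts each edge of $G[V\setminus D]$ exactly twice, so
\[
2\cdot |E(G[V\setminus D])| \;=\; \sum_{v_i\in V\setminus D}|N(v_i)\cap(V\setminus D)| \;\le\; \sum_{v_i\in V\setminus D} s(v_i) \;\le\; \lambda,
\]
giving $|E(G[V\setminus D])|\le \lambda/2$.

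Now I would translate this into $\bar G$. Edges of $\bar{G}[V\setminus D]$ are precisely the non-edges of $G[V\setminus D]$, so
\[
|E(\bar G[V\setminus D])| \;=\; \binom{|V\setminus D|}{2} - |E(G[V\setminus D])| \;\ge\; \binom{|V\setminus D|}{2}-\frac{\lambda}{2}.
\]
Dividing by $\binom{|V\setminus D|}{2}$ (the number of edges in the complete graph on the same vertex set) yields the density
\[
\gamma \;\ge\; 1-\frac{\lambda/2}{\binom{|V\setminus D|}{2}}.
\]
Since the function $x\mapsto 1-\frac{\lambda/2}{\binom{x}{2}}$ is monotonically increasing in $x$ and $|V\setminus D|\ge n-k$, I can replace $|V\setminus D|$ by $n-k$ on the right to get $\gamma\ge \bigl(\binom{n-k}{2}-\lambda/2\bigr)/\binom{n-k}{2}$, which is the claimed lower bound. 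The upper bound $\gamma\le 1$ is immediate from the definition of density.

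The step most likely to trip one up is the last simplification: one must be careful that weakening $|V\setminus D|$ to its lower bound $n-k$ weakens the density estimate in the right direction; this is precisely why it is the lower end of the interval that references $\binom{n-k}{2}$ rather than $\binom{|V\setminus D|}{2}$. Everything else is a direct accounting exercise with no hidden subtleties, since the doubling in $\sum s(v_i)$ is exactly why the bound on missing edges is $\lambda/2$ and not $\lambda$.
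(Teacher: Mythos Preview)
Your proof is correct and follows the same line as the paper's own argument: bound the size of $V\setminus D$, bound the number of edges in $G[V\setminus D]$ by $\lambda/2$ via the summed slack constraint, and pass to the complement to read off the density. In fact your version is slightly more careful than the paper's, which tacitly conflates $\binom{|V\setminus D|}{2}$ with $\binom{n-k}{2}$; your explicit monotonicity remark justifies precisely that replacement.
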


\begin{proof}
If $D \subseteq V$ is the desired $(k,\lambda)$-lds of $G$ such that $|D| \leq k$ 
and $\Sigma_{v_i \in V\setminus D}{s(v_i)} \leq \lambda$,
then we can infer two points: the size of $V\setminus D$ is at least $(n-k)$ and every vertex $v_i \in V\setminus D$ 
has at most $\frac{\lambda}{2}$ neighbors in $V \setminus D$. 
In other words, the subgraph induced on $V\setminus D$ has at least $(n-k)$ vertices with at most $\frac{\lambda}{2}$
edges between the vertices. Hence, the complement of the subgraph will also be of size at least $(n-k)$, but will
have at most $\frac{\lambda}{2}$ edges missing. So, the total number of edges in the complement subgraph would
be at least $\{ {n-k \choose 2} - \frac{\lambda}{2} \}$. Hence, we get a core with density $\gamma$:
$\{ {n-k \choose 2} - \frac{\lambda}{2} \}/{n-k \choose 2} \leq \gamma \leq 1$.
\end{proof}

This results in the following corollary for core in the original graph,
\begin{corollary}
For a graph $G=(V,E)$, let $S=\{s(v_1),\cdots,s(v_n)\}$ be an integral vector, $k$ and $\lambda$ be positive
input parameters.
If the complement graph $\bar{G}$ has a $(k,\lambda)$-lds $D$ such that $|D| \leq k$ and
$\Sigma_{v_i}{s(v_i)} \leq \lambda$, then the subgraph $G[V\setminus D]$ of the original graph $G$
is a core $C$ of size at least $(n-k)$ with density $\{ {n-k \choose 2} - \frac{\lambda}{2} \} / {n-k \choose 2}
\leq \gamma \leq 1$.
\end{corollary}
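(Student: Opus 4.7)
The plan is to obtain this corollary as an almost immediate consequence of the preceding lemma by exploiting the involution $\bar{\bar{G}} = G$. The lemma says: whenever we start from some graph, call it $H$, that admits a $(k,\lambda)$-lds $D$ with the stated budget on $\sum s(v_i)$, the induced subgraph $\bar{H}[V \setminus D]$ on the non-dominating vertices is a core of size at least $n-k$ with density at least $\{\binom{n-k}{2} - \lambda/2\}/\binom{n-k}{2}$. I would simply instantiate this with $H := \bar{G}$.

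So the first step is to rename: the hypothesis of the corollary gives a $(k,\lambda)$-lds $D$ of $\bar{G}$, so setting $H = \bar{G}$ places us exactly in the lemma's setting. The second step is to invoke the lemma to conclude that $\bar{H}[V \setminus D]$ is a core of size at least $n-k$ with the stated density bound. The third, and only slightly non-trivial, step is to observe that $\bar{H} = \bar{\bar{G}} = G$ (as simple undirected graphs on the same vertex set), so $\bar{H}[V \setminus D] = G[V \setminus D]$. Thus $G[V \setminus D]$ itself is the advertised core, and the density bounds transfer verbatim because nothing in them depends on whether the underlying graph was originally $G$ or $\bar{G}$.

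There is essentially no obstacle here; the corollary is a relabeling of the lemma. The only thing worth being careful about is the boundary conditions on $S$: the lemma's $s(v_i)$ values count edges \emph{inside} $V \setminus D$ in the graph on which the lds is computed, which is now $\bar{G}$, and these become \emph{missing} edges in $V \setminus D$ inside $G$ after complementation. This is precisely why the $\lambda/2$ term appears as a subtraction from $\binom{n-k}{2}$ in the density lower bound, matching the statement. No further computation is needed.
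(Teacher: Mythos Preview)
Your proposal is correct and matches the paper's approach exactly: the paper presents this corollary as an immediate consequence of the preceding lemma (offering no separate proof), and your argument---apply the lemma with $H=\bar{G}$ and use $\bar{\bar{G}}=G$---is precisely the intended derivation.
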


\subsection{Solution to {\sc cluster editing}}

We give a search tree-based FPT algorithm for this problem.
We search for~{\em conflict triples} in the graph. A conflict triple $\{u,v,w\}$ has vertices $u$ and $v$ 
connected by an edge and $u$ is connected to a vertex $w$ that is not connected to $v$.
Such a conflict triple compels us to do one of four things: (a)~add an edge $(v,w)$, (b)~remove the edge $(u,v)$,
(c)~remove the edge $(u,w)$ (d)~allow the triple to be as it is.
For each of the first three options, we will require to do an edit operation, and hence counts to the
parameter $k$. The last option counts to the parameter $\lambda$. This is because, by allowing
the conflict triple to remain as it is, we are allowing the cluster containing this triple to have a missing
edge. By branching on each of these cases, we can arrive
at a $O(4^{k+\lambda}n^2)$ FPT algorithm.

One way to improve the algorithm would be to maintain markers~{\em permanent} and~{\em forbidden}.
When an edge $(u,v)$ is added, it is marked~{\em permanent} so that subsequent calls do not remove it.
Similarly, when an edge $(u,v)$ is removed, it is marked~{\em forbidden} so that subsequent calls do not add it.
With these markers, we can arrive at better worst-case bounds for the search tree.

\subsection{Solution to~{\sc quasi isomorphism}}

We state the following lemma and explain it.
\begin{lemma}
Let $H$ be the product graph of two given graphs $G_1$ and $G_2$. A core $C$ in $H$ with at most $\frac{\lambda}{2}$
edges missing corresponds to local quasi isomorphism between $G_1$ and $G_2$ with at most $\frac{\lambda}{2}$ mismatches.
\end{lemma}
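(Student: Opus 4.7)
The plan is to recognize $H$ as (essentially) the modular product graph of $G_1$ and $G_2$ under the label map $l$, so that the lemma becomes the standard ``(near-)clique in the modular product equals (near-)isomorphism between induced subgraphs'' correspondence, generalized to allow a controlled number of mismatches.

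First, I would pin down the construction from Section~\ref{alignment}: each vertex of $H$ corresponds to a matched vertex pair $(u_1, l(u_1))$ with $u_1 \in V_1$ and $l(u_1) \in V_2$, and an edge of $H$ between two such vertices $(u_1, l(u_1))$ and $(v_1, l(v_1))$ is present exactly when the edge/non-edge status of $u_1 v_1$ in $G_1$ agrees with that of $l(u_1) l(v_1)$ in $G_2$. Thus every edge of $H$ records one ``consistency bit'' between $G_1$ and $G_2$ under $l$, and every missing edge of $H$ among matched pairs records exactly one mismatch in the sense of Section~\ref{alignment}.

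Given a core $C$ in $H$ with at most $\lambda/2$ missing edges, I would then set $A_1 = \{ u_1 \in V_1 : (u_1, l(u_1)) \in V(C)\}$ and $A_2 = l(A_1) \subseteq V_2$; these will be the local regions of $G_1$ and $G_2$ witnessing quasi isomorphism. The restriction of $l$ to $A_1$ is a bijection onto $A_2$, which immediately gives condition (a) of quasi isomorphism from Section~\ref{alignment}. For condition (b), every pair of distinct vertices in $V(C)$ joined by an edge of $C$ corresponds to a consistent edge/non-edge pattern at $u_1 v_1$ versus $l(u_1) l(v_1)$, while each missing edge inside $C$ corresponds to exactly one mismatch between $G_1[A_1]$ and $G_2[A_2]$. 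Hence the number of mismatches equals the number of missing edges of $C$, which is bounded by $\lambda/2$.

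The main obstacle is conceptual rather than computational: the product graph construction in Section~\ref{alignment} is informally stated (writing the vertices of $H$ as pairs $\{u_1, v_1\}$ and $\{u_2, v_2\}$ rather than pairs $(u_1, l(u_1))$), so the first task is to fix the precise vertex and edge sets of $H$ and to verify that they agree with the modular-product description above. Once that is done, the lemma reduces to unwinding definitions via the bijection between missing edges of $C$ and mismatches, and no further combinatorial work is needed.
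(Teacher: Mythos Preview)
Your proposal is correct and follows essentially the same route as the paper: the paper's argument simply observes that a missing edge in the core $C$ between vertices $\{u_1,v_1\}$ and $\{u_2,v_2\}$ of $H$ forces exactly one of the two mismatch cases (edge in $G_1$ but not $G_2$, or vice versa), so at most $\lambda/2$ missing edges yield at most $\lambda/2$ mismatches. Your write-up is more careful---you make the modular-product structure of $H$ explicit, name the local regions $A_1,A_2$, and separately verify conditions (a) and (b)---but the underlying idea and the key bijection between missing edges of $C$ and mismatches are identical to the paper's.
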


Let $C$ be the core of $H$ obtained by the
algorithm in Theorem~\ref{theorem1}. If $C$ has $\frac{\lambda}{2}$ edges missing then these edges correspond to $\frac{\lambda}{2}$
mismatches between two local regions of $G_1$ and $G_2$. That is, if an edge between vertices labeled
$\{u_1,v_1\}$, $\{u_2,v_2\}$ of $H$ is missing in $C$, then either (a)~edge $(u_1,v_1) \in E_1$ (of $G_1$)
and edge $(u_2,v_2) \notin E_2$ (of $G_2$), or (b)~edge $(u_1,v_1) \notin E_1$ (of $G_1$)
and edge $(u_2,v_2) \in E_2$ (of $G_2$). The number of such mismatches is bounded by $\frac{\lambda}{2}$.
\vspace{0.1cm}

\section{Experiments and results}
\label{results}

\paragraph{Experimental setup:}
Here, we only show the experimental results of modeling hubs and functional modules using the~{\sc $(k,\lambda)$-lds problem}.
We implemented the algorithm of~{\bf Theorem 1} on an Intel Xeon 2.4GHz 4GB RAM Debian machine. 
The protein interaction dataset is of~{\em Saccharomyces Cerevisiae}
(budding yeast) from the Biogrid database~\cite{Databases}. 
After cleaning and removing self and multi edges, the resultant network had 1562 proteins and 1408 edges. 
The highest degree of the network was 46. About 5\% of proteins had degrees in the range 20-35, while rest were below 3. We left the 
isolated proteins to be filtered away by the algorithm instead of during cleaning.

{\em Setting the parameters:}~$k$ is the parameter that determines the size of the solution $D$. 
Through several experimental runs we have noticed that the algorithm
works very efficiently for scale-free networks and typically gives a solution for small(relative to $n$) values of $k$. 
This is mainly because of the presence of few high-degree nodes which when picked cover most of the interactions.
The complements of sparse scale-free networks turn out
to be dense and a small $k$ typically does not give a solution. 
If $k$ is very small, too many nodes will get included into the solution during kernelization, thereby overshooting the
solution size. If $k$ is too large, the problem may not kernelize well. But a large $k$ does not necessarily make the problem intractable 
in practice because the theoretical bound is essentially a worst case bound. 
$\lambda$ is set much smaller compared to the total number of edges. We chose to keep slack 
values of the vertices proportional to their degrees by fixing a division factor $r$ while setting vector $S$.

\paragraph{Detection of hubs and participating proteins:}
Some notations used to represent the results are:
$m'$ is number of edges in the complemented graph $\bar{G}$, $r$ is the division factor and the values of the vector $S$ 
are set as $s(v)=\frac{d(v)}{r}$ for all $v \in V$, 
%$C$ is the set of vertices in the core(participating proteins), 
$\gamma$ is the density of the core in the original graph $G$ and $T$ is the time in seconds. 

{\bf Table~\ref{table: PPI_hubs}} gives the results of detecting hub proteins on four runs with different parameter settings. The 
slack values for most vertices are usually very small because of the division factor $r$. 
The higher these slack values, the higher the number of slack edges allowable and hence, smaller the size of solution set $D$. 
Secondly, in practice the algorithm runs very fast eventhough the theoretical worst case bound is exponential.
{\bf Table~\ref{table: PPI_hubs_bio}} gives brief biological descriptions of few hubs that were discovered. 
The myosin pair~{\sc myo3-myo5} was present in our dataset
and the algorithm has successfully found it. 
Their complete biological descriptions
can be obtained by searching for the protein name in the Biogrid website~\cite{Databases}. 
The full list of hubs discovered can be got from~\cite{Datasets_Results}.
{\bf Table~\ref{table: PPI_qc}} gives the results of four runs on the complemented protein network. The points to notice here are that the complements
of protein networks are usually dense. The algorithm runs slower on these complemented networks and hence discovering a core takes longer time
than hubs. Also, the lesser the slack values in the complemented graph, the higher the density $\gamma$ of the core in the original graph. 
This is because of fewer slack edges allowable among vertices of set $C$ in complemented graph and hence, lesser the number of edges missing in the 
core of the original graph.

\begin{table}[ht]
\scriptsize
\centering
\begin{tabular}{c c | c c c | c c c}
\hline\hline
$n$ & $m$ & $k$ & $\lambda$ & $r$ & $|D|$ & $|V \setminus D|$ & $T$\\[0.5ex]
\hline
1562 & 1408 & 400 & 100 & 4  &  320 & 1242 & 1.00\\
1562 & 1408 & 400 & 200 & 6 &  341 & 1221 & 1.02\\
1562 & 1408 & 450 & 200 & 8 &  341 & 1221 & 1.06\\
1562 & 1408 & 450 & 200 & 10 &  346 & 1216 & 1.04\\
\hline
\end{tabular}
\caption{Hubs: four runs on yeast protein network}
\label{table: PPI_hubs}
\end{table}

\begin{table}[ht]
\scriptsize
\centering
\begin{tabular}{c | c c}
\hline\hline
Hub protein & Description & GO annotations\\[0.5ex]
\hline
MYO5 & deletion with MYO3 affects growth,cytoskeleton org & myosin binding;exocytosis\\
MYO3 & deletion with MY05 affects growth,cytoskeleton org & microfilament motor activation\\
RVS167 & actin-associated;endocytosis; & cytoskeleton protein binding;osmotic stress\\
VRP1 & cytokinesis;mammalian WASP syndrome & actin binding;cytoskeleton org\\
\hline
\end{tabular}
\caption{Descriptions of some hubs verified through the Biogrid~\cite{Databases}}
\label{table: PPI_hubs_bio}
\end{table}

\begin{table}[ht]
\scriptsize
\centering
\begin{tabular}{c c c | c c c | c c c c c}
\hline\hline
$n$ & $m$ & $m'$ & $k$ & $\lambda$ & $r$ & $|D|$ & $|C|$ & $\gamma$ & $T$\\[0.5ex]
\hline
1562 & 1408 & 1217733 & 1350 & 2000 & 3 &  1047 & 515 &  0.0267 & 10\\
1562 & 1408 & 1217733 & 1400 & 2000 & 4 &  1178 & 384 &  0.0484 & 11\\
1562 & 1408 & 1217733 & 1350 & 2000 & 5 &  1256 & 306 &  0.0833 & 11\\
1562 & 1408 & 1217733 & 1350 & 2000 & 6 &  1303 & 259 &  0.1091 & 12\\
\hline
\end{tabular}
\caption{Participating proteins: four runs on the complemented yeast protein network}
\label{table: PPI_qc}
\end{table}

\paragraph{Detection of functional modules:}
Some of the functional modules detected are shown in~{\bf Table~\ref{table: PPI_fm}}. Eventhough the density of the core $C$ of participating
proteins is small, the densities of the individual functional modules is very high. So, the Steps 1 and 2 have indeed helped in gathering
all the proteins more likely to participate in some functional module(s) and made it easier to filter away non-participating proteins.
Gene Ontology based verfication through~\cite{Yeast_GO} also shows high percentage of common Process, Function and Component ontologies
shared by the proteins involved in these modules. In this verification we submit the set of proteins belonging to a module onto the
GO browser and gather statistics about the shared properties. This also also shows several common functions are shared across modules.

\begin{table}[ht]
\scriptsize
\centering
\begin{tabular}{c | c c | c c c c}
\hline\hline
ID & Size & Density & \% shared & $p$-value & Description \\[0.5ex]
\hline
FM1 & 10 & 0.667 & 96\% & 1.27e-5 & Post-trans mod; cell division; phosphorylation\\
FM2 & 18 & 0.561 & 90\% & 9.75e-9 & Protein-kinase activity; receptor signaling; transferase activity\\
FM3 & 22 & 0.545 & 76\% & 8.32e-8 & Protein-kinase activity; MAP kinase; microfilament motor\\
FM4 & 18 & 0.543 & 80\% & 9.75e-9 & Protein-kinase activity; phosphotransferase activity; signal transducer\\
\hline
\end{tabular}
\caption{Functional modules verified through Yeast GO~\cite{Yeast_GO}}
\label{table: PPI_fm}
\end{table}

\paragraph{Discussions:}
The detection of individual functional modules in Step 3 is just one possible approach.
The information obtained through hubs and participating proteins can be used to detect functional modules and complexes using
other algorithms as well. For instance, the algorithm of~\cite{Guangyu} uses a heurestic search for finding cliques and `near cliques' 
on a protein network. This algorithm can be run on the subgraph formed by the participating proteins instead of the whole protein network considering 
a hub as the start node. Similarly, certain
clustering and random walk based methods given in~\cite{Vipin} which use random nodes as the initial start points can also make use
of our methods for finding the start nodes. Therefore, the hub and participating protein based information can be used in conjunction or as
filtering steps with other methods to arrive at both more efficient as well as accurate results.

From a technical point of view too we get a lot of insights into the problem from these experiments. 
We ran our algorithm on some standard, scale-free and random graphs. 
We noticed that $n$ and $|D|$ are almost linearly related. Also, $D$ is largest when
the slacks are zero, that is, vertex cover of the graph is the largest solution set for the~{\sc $(k,\lambda)$-lds} problem. 
As the slacks are increased, the size of $|D|$(and hence the parameter $k$) as well as the time $T$ to arrive at a feasible solution are reduced.

\section{Conclusions and future work}
\label{concl}
We discussed parameterized modeling and solutions to several practical problems in proteomics. Through this we showed how 
parameterized approaches can be viable alternatives to previously known sub-optimal techniques. We are in the process of biological
and statistical analysis of the experimental results. For instance, checking (a)~how many actual lethal proteins missed by the algorithm 
eventhough they were present in the dataset, (b)~how many non-hubs falsely detected as hubs, (c)~any new hub not present in public databases
but found in our dataset, etc. We are also in the process of experimenting on the cluster editing and quasi isomorphism problems on real data sets.
There is a lot of scope for related future research as well.
Firstly, one can look at more efficient parameterized algorithms for the same problems, especially
arriving at a linear kernel for the $(k,\lambda)$-LDS problem. Secondly, one can look at methods for graph partitioning and applying
our algorithms for each partition, thereby increasing the scalability of the approach. Thirdly, the current work can motivate researchers
to look at parameterized approaches to other problems in this field, for instance, protein function and structure prediction.

%\section{Acknowledgements}
%We would like to acknowledge Ning Kang from the Department of Pathology, University of Michigan for helping us during the biological verification
%of results.

\newpage


\begin{thebibliography}{50}
\setlength{\itemsep}{-1mm}
\scriptsize{


\bibitem{Databases} MIPS database http://mips.gsf.de/proj/yeast
BIND: www.bind.ca/
Biogrid: http://www.thebiogrid.org/

\bibitem{NCBI} The NCBI protein resource:
http://www.ncbi.nlm.nih.gov and WikiGenes: www.wikigenes.org/

\bibitem{Alexei}
A.Vazquez et al., 
{\em Modeling of protein interaction networks}, Karger Complexus(2003). 

\bibitem{Wutchy}
S.Wutchy, {\em Scale-free behavior in protein domain networks}, Mol Bio Evo(2001).

\bibitem{Berg}
J.Berg, M.Lassig,
{\em Local graph alignment and motif search in biological networks},
PNAS(2004).

\bibitem{Barabasi}Barabasi and Albert,
{\em Emergence of scaling in random networks}, Science(1999).

\bibitem{Recomb08}
M.Kalaev, V.Bafna, R.Sharan,
{\em Fast and Accurate Alignment of Multiple Protein Networks}, LNBI 246-256(2008)

\bibitem{DF99}
R.G.~Downey and M.R.~Fellows,
{\em Parameterized Complexity\/}, Springer-Verlag(1999).

\bibitem{FG06}
J.~Flum and M.~Grohe, {\em Parameterized Complexity Theory}, Springer-Verlag(2006).

\bibitem{Nie06}
R.~Niedermeier, {\em Invitation to Fixed-Parameter
Algorithms}, Oxford University Press(2006).


\bibitem{Guangyu}
Guangyu et al., {\em An algorithm for finding functional modules and protein complexes in protein-protein interaction networks},
J Biomed and Biotech, Hindawi(2008).

\bibitem{Chuan}
Chuan Lin et al., {\em Clustering methods in protein-protein interaction networks},
Knowledge Discovery in Bioinformatics, John Wiley and Sons(2006).

\bibitem{Chris_Ding}
Chris Ding et al., {\em Transitive closure and metric inequality of weighted graphs},
Data Mining and Bioinformatics, Vol1(2006).

\bibitem{Vipin}
Vipin Kumar et al., {\em Computational approaches to protein function prediction},
technical report TR06-028(2006).

\bibitem{Bocker1}
S.Rahmann et al.,
{\em Exact and Heuristic Algorithms for Weighted Cluster Editing},
Proc of Computational Systems Bioinformatics(2007).

\bibitem{Bocker2}
S.B\"{o}cker et al.,
{\em A fixed-parameter approach for Weighted Cluster Editing},
Proc of Asia-Pacific Bioinformatics Conference(2008).

\bibitem{Langston}
NE~Baldwin et al,
{\em Computational, Integrative, and Comparative Methods 
for the Elucidation of Genetic Coexpression Networks}, 
Journal of Biomed. Biotech., pp172-180, Hindawi(2005).

\bibitem{Sri_HonWai}
Srihari S, HK~Ng, K~Ning, L~Hon Wai,
{\em Detecting hubs and quasi cliques in scale-free networks},
Proc Int Conf Pattern Recognition (ICPR) Florida USA (2008).

\bibitem{har_ger}
T.~Gerlach and J.~Harant,
{\em A Note on Domination in Bipartite Graphs}, 
Discuss. Math. Graph Theory 22, 229-231 (2002). 

\bibitem{Raman_Saket_Sri}
Raman V, Saket S, Srihari S,
{\em Parameterized algorithms for generalized domination}, 
Proc Conference on Combinatorial Optimization and Applications (COCOA) St John, Canada (2008).

\bibitem{Carter_Johnson}
MV~Carter, DG~Johnson
{\em Extended Clique Initialization in Examination Timetabling},
Journal of Operational Research Society(2001).

\bibitem{Datasets_Results}
Website: https://www.comp.nus.edu.sg/\~{}srigsri/Projects/

\bibitem{Yeast_GO}
Websites:
http://www.yeastgenome.org/cgi-bin/GO/goTermFinder.pl \\
and
http://www.yeastgenome.org/cgi-bin/GO/goSlimMapper.pl

}

\end{thebibliography}
\end{document}